\theoremstyle{plain}
\newtheorem{theorem}{Theorem}[section]
\newtheorem{lemma}[theorem]{Lemma} 
\theoremstyle{remark}
\newtheorem{remark}{Remark}[section]
\theoremstyle{definition}
\begin{document}

\title{\textbf{On quantum waveguide with shrinking potential}}

\author{A.~Bikmetov, R.~Gadyl'shin}
\date{}
\maketitle

\vspace{-1 true cm}

\begin{quote}
{\small {\em M.~Akmulla Bashkir State Pedagogical University, October Revolution
St.~3a,
\\
450000 Ufa, Russia
\\
E-mail: \texttt{bikmetovar@yandex.ru, gadylshin@yandex.ru}}}
\end{quote}

\begin{abstract}
We consider the spectrum of a Schr\"odinger operator in a
{multi\-di\-men\-sio\-nal} cylinder perturbed by a shrinking potential. We study
the phenomenon of a new eigenvalue emerging from the threshold of the essential
spectrum and give the sufficient conditions for such eigenvalues to emerge. If
such eigenvalues exist, we construct their asymptotic expansions.
\end{abstract}

MCS numbers: 35J10, 35B20, 35P99.

\section*{Introduction}

\medskip

Let $\Omega$ be an open connected domain in $\mathbb{R}^{n-1}$ with an
infinitely smooth boundary if $n\geq 3$ and be a finite interval as $n=2$. We
denote $\Pi=\Omega\times(-\infty, \infty)$. By $-\Delta_{\mathcal{D}}$ we
indicate the Dirichlet Laplacian in $L^2(\Pi)$ with domain $H_0^2(\Pi)$, where
$H_0^2(\Pi)$ is a subset of the functions in $H^2(\Pi)$ vanishing on the
boundary of $\partial\Pi$. It is known that its spectrum
$\sigma(-\Delta_{\mathcal{D}})$ consists only of its essential part $\sigma_e
(-\Delta_{\mathcal{D}})=[\mu_0,\infty)$, where $\mu_0$ is the minimal eigenvalue
of Laplacian $-\Delta^\Omega_{\mathcal{D}}$ in $L^2(\Omega)$ with the domain
$H^2_0(\Omega)$. By the essential spectrum $\sigma_e(\mathcal{A})$ of an
operator $\mathcal{A}$ we mean the set of $\lambda$ so that there exists a
bounded noncompact sequence $u_n\in\mathcal{D}_{\mathcal{A}}$ satisfying the
convergence  $(\mathcal{A}-\lambda I)u_n\to 0$ as $n\to\infty$.

In \cite{DE} they considered the Schr\"odinger operator
\begin{equation*}
   H_h=-\Delta_{\mathcal{D}}+hV,
\end{equation*}
in $L^2(\Pi)$ with the domain $H_0^2(\Pi)$, where $0<h\ll 1$ was a
small parameter, and the potential $V(x)$ was supposed to be
measurable and bounded in $\Pi$, and
\begin{equation*}
\left<|x||V|\phi_0^2\right>\,<\infty.
\end{equation*}
Hereinafter
\begin{equation*} \left<g\right>:=\int\limits_{\Pi}g(x)\mathrm{d}x,
\end{equation*}
and $\phi_0$ is a normalized eigenfunction of
$-\Delta^\Omega_{\mathcal{D}}$ associated with the minimal
eigenvalue $\mu_0$ of $-\Delta^\Omega_{\mathcal{D}}$.  It is known
that  $\phi_0$ can be chosen real-valued. Applying Birman-Schwinger
principle, it was shown that for $h$ small enough the operator $H_h$
has the unique isolated eigenvalue $e(h)$ below $\mu_0$ if and only
if
%
\begin{equation}\label{GR1}
 \left<V\phi^2_0\right>< 0,
\end{equation}
or 
\begin{equation}\label{GR2}
\left<V\phi^2_0\right> = 0.
\end{equation}
If the inequality (\ref{GR1}) or the equality (\ref{GR2}) hold true,
the asymptotic expansion of $e(h)$ was obtained. In particular, if
the inequality (\ref{GR1}) holds, it was shown that
\begin{equation}\label{5}
\begin{aligned}
e(h)=\mu_0-\frac{h^2}{4}\left<V \phi_0^2\right>^2+O(h^3),\quad h\to
0.
\end{aligned}
\end{equation}

The axisymmetric case with a potential depending nonlinearly on a
{pa\-ra\-me\-ter} $h$ was studied in \cite{BCEZ}. It was shown that
the next terms of the expansion of the potential with respect to a
small parameter have an influence on the necessary and sufficient
conditions for an eigenvalue to emerge.

In the present paper we study the perturbation of a quantum
waveguide by a potential which depends on the parameter $h$ as
follows. As $h\to0$, its support shrinks to a point, while the
values of the potential can increases unboundedly. Employing the
results of \cite{GRR1}, \cite{GRR}, we obtain the sufficient
condition for both the presence and absence of an eigenvalue
emerging from the threshold of the essential spectrum. In the former
case we construct the two-terms asymptotics for the emerging
eigenvalues.

\section{Main results}

Without loss of generality we assume that the domain $\Pi$ contains the origin.

We study the perturbed Schr\"odinger operator
\begin{equation*}\label{GR}
   \mathcal{H}^h=-\Delta_{\mathcal{D}}
   +h^{-\alpha} \mathcal{V}_h,
\end{equation*}
in $L^2(\Pi)$ with the domain $H^2_0(\Pi)$, where $\alpha<1$ is a
fixed number,
$\mathcal{V}_h(x)=\mathcal{V}\left(\frac{x}{h}\right)$,
$\mathcal{V}$ is a piecewise continuous bounded in $\mathbb{R}^n$
function with a compact support, which can be complex-valued. By
analogy with \cite{BG} on can show that the operator $\mathcal{H}^h$
is closed and $\sigma_e(\mathcal{H}^h)=[\mu_0,\infty)$.
Since the
function $\mathcal{V}$ is complex-valued, the operator $\mathcal{H}^h$ is
non-self-adjoint.

It follows from the definition of the potential, that its support shrinks to a
point as $h\to 0$, while  the values increasing unboundedly (as $\alpha>0$). It
is clear that it is impossible to reduce by a change of the variables the
eigenvalue problem of $\mathcal{H}^h$ to that of $H_h$.

Denote
\begin{equation*}
\begin{aligned}
&\beta_n(h)=h\sqrt{|\ln h|},\quad &&n=2,
\\
&\beta_n(h)=h,\quad &&n\geq 3.
\end{aligned}
\end{equation*}
The main result of the paper is following theorem.

\begin{theorem}\label{th1}
Let $\alpha<1$. If
\begin{equation*}
\mathrm{Re}\left<\mathcal{V}\right> > 0,
\end{equation*}
then the operator $\mathcal{H}^h$ has no eigenvalues converging to
$\mu_0$.

If
\begin{equation}\label{GR3}
\mathrm{Re}\left<\mathcal{V}\right> < 0,
\end{equation}
then the operator $\mathcal{H}^h$ has a unique, and, in addition, simple
eigenvalue converging to $\mu_0$. Moreover, its asymptotics reads as follows
\begin{equation}\label{g5}
e(h)=\mu_0-\frac{h^{2(n-\alpha)}}{4}
\left(\phi_0^2(0)\left<\mathcal{V}\right>\right)^2\left(1+
\mathcal{O}(h+h^{-\alpha}\beta_n(h))\right).
\end{equation}
\end{theorem}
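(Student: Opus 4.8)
\medskip

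The natural tool is the Birman--Schwinger principle in its non-self-adjoint form. Put $\mathcal V_h^{1/2}:=(\operatorname{sgn}\mathcal V_h)\,|\mathcal V_h|^{1/2}$, so that $\mathcal V_h=|\mathcal V_h|^{1/2}\mathcal V_h^{1/2}$, and for $\lambda\in\mathbb C\setminus[\mu_0,\infty)$ set $R_0(\lambda):=(-\Delta_{\mathcal D}-\lambda)^{-1}$ and
\[
\mathcal B_h(\lambda):=h^{-\alpha}\,|\mathcal V_h|^{1/2}\,R_0(\lambda)\,\mathcal V_h^{1/2}.
\]
Since $\mathcal V$ is bounded with compact support, $\mathcal B_h(\lambda)$ is compact and holomorphic in $\lambda$ on $\mathbb C\setminus[\mu_0,\infty)$, it is invertible for $\lambda\to-\infty$, and $\lambda$ is an eigenvalue of $\mathcal H^h$ precisely when $-1\in\operatorname{spec}\mathcal B_h(\lambda)$, equivalently when $\det\bigl(I+\mathcal B_h(\lambda)\bigr)=0$; by analytic Fredholm theory the eigenvalues in a fixed disc $\{|\lambda-\mu_0|<\delta\}$ are finitely many and, counted with algebraic multiplicity, are exactly the zeros of this Fredholm determinant there. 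So the whole statement reduces to analysing one scalar analytic function near $\mu_0$.

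\medskip

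The second step is to peel off the threshold singularity of $R_0(\lambda)$ and rescale. Separating variables in $\Pi=\Omega\times\mathbb R$ and writing $\varkappa:=\sqrt{\mu_0-\lambda}$ (principal branch, $\operatorname{Re}\varkappa>0$),
\[
R_0(\lambda)=\frac{1}{2\varkappa}\,\phi_0\otimes\phi_0+R_0^{\mathrm{reg}}(\lambda),
\]
where $R_0^{\mathrm{reg}}(\lambda)$ is holomorphic in $\varkappa$ near $\varkappa=0$ and, near the fixed interior point $x=y=0$, carries only the standard elliptic diagonal singularity ($|x-y|^{2-n}$ for $n\ge3$, $\log|x-y|$ for $n=2$). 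Inserting this into $\mathcal B_h(\lambda)$ and changing variables $x=h\xi$, $y=h\eta$, which maps $\operatorname{supp}\mathcal V_h$ onto the fixed set $\operatorname{supp}\mathcal V$, turns $\mathcal B_h(\lambda)$ into a unitarily equivalent operator on $L^2(\operatorname{supp}\mathcal V)$ of the form
\[
\widetilde{\mathcal B}_h(\lambda)=\frac{h^{n-\alpha}}{2\varkappa}\,\Theta_h+h^{n-\alpha}\,\mathcal T_h(\lambda),
\]
where $\Theta_h$ is a rank-one operator whose unique nonzero eigenvalue equals $\int\mathcal V(\xi)\,\phi_0^2(h\xi')\,d\xi=\phi_0^2(0)\langle\mathcal V\rangle\bigl(1+O(h)\bigr)$ (the $O(h)$ coming from the Taylor remainder of $\phi_0$), and $\mathcal T_h(\lambda)$ collects the rescaled $R_0^{\mathrm{reg}}$. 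The key fact, for which one invokes the uniform estimates of \cite{GRR1}, \cite{GRR}, is that although the rescaling makes the kernel of $R_0^{\mathrm{reg}}$ grow, it grows only like $h^{2-n}$ (like $|\ln h|$ when $n=2$), so that $\|h^{n-\alpha}\mathcal T_h(\lambda)\|=O\bigl(h^{-\alpha}\beta_n(h)\bigr)$ uniformly for $\lambda$ in the disc; this is precisely where the dimension-dependent factor $\beta_n(h)$ is produced. In particular $\|\widetilde{\mathcal B}_h(\lambda)\|<1$ whenever $A^2h^{2(n-\alpha)}\le|\lambda-\mu_0|<\delta$ with $A$ large, so every eigenvalue of $\mathcal H^h$ near $\mu_0$ lies in $\{|\lambda-\mu_0|\lesssim h^{2(n-\alpha)}\}$, i.e. on the scale $\varkappa\sim h^{n-\alpha}$, where the first term of $\widetilde{\mathcal B}_h(\lambda)$ is rank one and of order one while the remainder is $o(1)$.

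\medskip

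The last step is the scalar equation. On that scale every eigenvalue of $\widetilde{\mathcal B}_h(\lambda)$ other than the one perturbing $\tfrac{h^{n-\alpha}}{2\varkappa}\int\mathcal V\,\phi_0^2(h\cdot')$ is $o(1)$ and cannot equal $-1$; a one-dimensional perturbation computation (equivalently, Rouch\'e's theorem applied to $\det(I+\widetilde{\mathcal B}_h(\lambda))$) then shows that $I+\widetilde{\mathcal B}_h(\lambda)$ fails to be invertible exactly when
\[
1+\frac{h^{n-\alpha}}{2\varkappa}\,\phi_0^2(0)\,\langle\mathcal V\rangle\,\bigl(1+O(h+h^{-\alpha}\beta_n(h))\bigr)=0 ,
\]
and that the corresponding zero of the determinant is simple, hence the eigenvalue of $\mathcal H^h$ is algebraically simple. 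This rearranges to a contraction in $\varkappa$,
\[
\varkappa=-\frac{1}{2}\,h^{n-\alpha}\,\phi_0^2(0)\,\langle\mathcal V\rangle\,\bigl(1+O(h+h^{-\alpha}\beta_n(h))\bigr),
\]
which, since $\varkappa$ must lie in the open right half-plane and $\phi_0(0)>0$, has a (necessarily unique) solution iff $\operatorname{Re}\langle\mathcal V\rangle<0$, whereas $\operatorname{Re}\langle\mathcal V\rangle>0$ forces $\operatorname{Re}\varkappa<0$ and thus rules out any eigenvalue converging to $\mu_0$. Squaring the expression for $\varkappa$ gives \eqref{g5}. The main obstacle is the second step: after the dilation $x=h\xi$ the kernel of $R_0^{\mathrm{reg}}(\lambda)$ is no longer bounded, and one must control $h^{n-\alpha}\mathcal T_h(\lambda)$ — uniformly in $\lambda$ over a full complex neighbourhood of $\mu_0$, and in an operator norm strong enough to run the rank-one perturbation and argument-principle arguments; this is exactly what the estimates of \cite{GRR1}, \cite{GRR} provide, and it is what pins down the error term in \eqref{g5}.
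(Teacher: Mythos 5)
Your overall strategy is sound and is in fact close in spirit to what the paper does: both routes reduce the problem to a scalar equation at the threshold by splitting off the rank-one singular part $\frac{1}{2\varkappa}\phi_0\otimes\phi_0$ of the resolvent and showing the remainder is negligible on the scale $\varkappa\sim h^{n-\alpha}$ (the paper does this by rewriting $h^{-\alpha}\mathcal{V}_h=\varepsilon(h)\beta_n^{-1}(h)\mathcal{V}_h$ with $\varepsilon(h)=h^{-\alpha}\beta_n(h)$ and invoking the abstract equation $2k+\varepsilon\mathcal{F}_\varepsilon(k)=0$ of Theorem~\ref{GRR}, rather than a Birman--Schwinger determinant). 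The genuine gap in your write-up is the step you yourself call the main obstacle: the uniform bound $\|h^{n-\alpha}\mathcal{T}_h(\lambda)\|=\mathcal{O}(h^{-\alpha}\beta_n(h))$ for the rescaled regular part. You attribute it to \cite{GRR1}, \cite{GRR}, but those references do not contain any estimate for shrinking (dilated) potentials; they only provide the abstract framework for perturbations $\varepsilon\mathcal{L}_\varepsilon$ with $\mathcal{L}_\varepsilon$ bounded uniformly from $H^2(Q)$ to $L^2(\Pi;Q)$. Producing the factor $\beta_n(h)$ is precisely the new technical content of the present paper: it is Lemma~\ref{l2.1}, a Hardy-inequality consequence giving $\|u\|^2_{L^2(hQ)}\leq C\beta_n^2(h)\|u\|^2_{H^2(Q)}$, which shows that $\beta_n^{-1}(h)\mathcal{V}_h$ satisfies (\ref{L}) and drives the iterated estimates (\ref{51}), (\ref{B12}). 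Your kernel-growth heuristic ($h^{2-n}$, resp.\ $|\ln h|$) is plausible and would even give a better bound, but as written it is an assertion, not a proof, and it is not supplied with the required uniformity in $\lambda$ over a complex neighbourhood of $\mu_0$; without this the Rouch\'e/rank-one argument has nothing to stand on.

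A second, smaller gap concerns the statements ``no eigenvalues converging to $\mu_0$'' and uniqueness/simplicity. Your Birman--Schwinger reduction lives on $\lambda\in\mathbb{C}\setminus[\mu_0,\infty)$ with $\mathrm{Re}\,\varkappa>0$, so by itself it says nothing about eigenvalues of the non-self-adjoint operator $\mathcal{H}^h$ that sit on or approach the ray $[\mu_0,\infty)$; to exclude these (and to count all eigenvalues near $\mu_0$ for uniqueness) one must continue the construction analytically in $\varkappa$ across $\mathrm{Re}\,\varkappa=0$, which is exactly what the operator $A(k)$, $\mathcal{\widetilde{R}}_\mathcal{D}(k)$ of (\ref{A}), (\ref{A1}) and Theorem~\ref{GRR} encapsulate; the paper inherits this for free, your proposal does not address it. Also, for $n\geq4$ the operator $|\mathcal{V}_h|^{1/2}R_0(\lambda)\mathcal{V}_h^{1/2}$ is not Hilbert--Schmidt, so ``$\det(I+\mathcal{B}_h(\lambda))$'' must be a regularized determinant, or you should run the argument purely through the rank-one Schur-complement reduction you mention parenthetically (which is essentially the paper's $(I+\varepsilon T_\varepsilon(k))^{-1}$ formulation). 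The simplicity of the eigenvalue likewise needs the link between the order of the zero and algebraic multiplicity, which you assert but do not argue.
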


The formulas (\ref{g5}) is an analogue of the formula (\ref{5}) and
implies that the eigenvalue converging to $\mu_0$ lie outside
$\sigma_e(\mathcal{H}^h)$. For the real-valued function
$\mathcal{V}$ the condition (\ref{GR3}) obviously becomes
%
\begin{equation}\label{GR4}
\left<\mathcal{V}\right><0.
\end{equation}
The conditions (\ref{GR3}) are (\ref{GR4}) are analogues of
(\ref{GR1}).

For the operator $\mathcal{H}^h$ an analogue of critical case
(\ref{GR2}) is the identity
\begin{equation}\label{!+}
\mathrm{Re}\left<\mathcal{V}\right>=0
\end{equation}
(which is
\begin{equation*}\label{!}
\left<\mathcal{V}\right>=0
\end{equation*}
for real-valued potentials $\mathcal{V}$).  However, while the
operator $H_h$ has an eigenvalue converging to $\mu_0$ in the
critical case (\ref{GR2}), it will follow from the proof of
Theorem~\ref{th1} that the critical condition (\ref{!+}) is not
sufficient for the existence of an eigenvalue of $\mathcal{H}^h$
converging to $\mu_0$ as $\alpha<0$.

In the concluding section we show that the same situation occurs for
the strip in the critical case (\ref{!}) as $0\leq
\alpha<\frac{1}{2}$ (cf. Remark~\ref{rmGR2}).

\begin{remark}\label{rmGR1}
In the proof of Theorem~\ref{th1} we employ substantially the
results of \cite{GRR}. Moreover, these results allow us to consider
not only real-valued potentials $\mathcal{V}$, but also
complex-valued ones. This is the reason why we consider
complex-valued potentials $\mathcal{V}$. In particular it means that
the perturbed operator $\mathcal{H}^h$ is not necessarily
self-adjoint.
\end{remark}

\section{Preliminaries}

In \cite{GRR} the operator
 \begin{equation*}
   \mathcal{H}_\varepsilon=-\Delta_{\mathcal{D}}
   +\varepsilon\mathcal{L}_\varepsilon,
\end{equation*}
was considered, where $0<\varepsilon\ll 1$ is a small parameter, and
$\mathcal{L}_\varepsilon$ is an arbitrary localized operator of
second order (not necessarily symmetric). Namely,
$\mathcal{L}_\varepsilon:H^2_{loc}(\Pi)\to L^2(\Pi; Q)$, where $Q$
is a fixed bounded domain lying in $\Pi$, and
\begin{equation*}
L^2(\Pi; Q):=\{u: u\in L^2(\Pi), \mathrm{supp}~u\subseteq
\overline{Q}\}.
\end{equation*}
The operator $\mathcal{L}_\varepsilon$ was assumed to be bounded
uniformly in $\varepsilon$,
\begin{equation}\label{L}
\|\mathcal{L}_\varepsilon u\|_{L^2(\Pi)}\leq C_1\|u\|_{H^2(Q)},
\end{equation}
where $C_1$ is a constant independent of $\varepsilon$. By analogy
with \cite{BG} one can check that the operator
$\mathcal{H}_\varepsilon$ in $L^2(\Pi)$ with the domain $H^2_0(\Pi)$
is closed and $\sigma_e(\mathcal{H}_\varepsilon)=[\mu_0,\infty)$.

For small complex $k$ we define a linear operator
$A(k)\,:\,L^2({\Pi};Q)\to H^2_{loc}({\Pi})$ as follows
\begin{equation}\label{A}
\begin{aligned}
A(k)g:=& \frac{\phi_0(x')}{2k}\int\limits_{\Pi}
\mathrm{e}^{-k|x_n-t_n|}\phi_0(t') g(t)\,\mathrm{d}t\\&+
\sum\limits_{j=1}^\infty \frac{\phi_j(x')}{2K_j(k)}\int\limits_{\Pi}
\mathrm{e}^{-K_j(k)|x_n-t_n|}\phi_j(t') g(t)\,d\mathrm{t},
\end{aligned}
\end{equation}
where $x'=(x_1, x_2,...,x_{n-1})$, $K_j(k)=\sqrt{\mu_j-\mu_0+k^2}$,
and $\mu_j$ and  $\phi_j$ are the eigenvalues of
$-\Delta^\Omega_{\mathcal{D}}$ and the associated eigenfunctions
orthonormalized in $L^2(\Omega)$. We note that
$A(k)=\mathcal{R}_\mathcal{D}(\mu_0-k^2)$ as $\mathrm{Re}\,k>0$,
where $\mathcal{R}_\mathcal{D}(\lambda)$ is the resolvent of the
operator $-\Delta_\mathcal{D}$. We denote by $I$ the identity
mapping, and by $\mathcal{\widetilde{R}}_\mathcal{D}(k):
L^2({\Pi};Q)\to H^2(Q)$ and $ T_\varepsilon(k): L^2({\Pi};Q)\to
L^2({\Pi};Q)$ we indicate the operators introduced as follows
\begin{align}
 \mathcal{\widetilde{R}}_\mathcal{D}(k)g:=
&A(k)g-\frac{1}{2k}
 \left<g\phi_0\right>,\label{A1}\\
  T_\varepsilon(k)g:=
&{\mathcal
L}_\varepsilon\mathcal{\widetilde{R}}_\mathcal{D}(k)g.\label{gr1}
\end{align}

The following results were obtained in \cite{GRR}.

\begin{theorem}\label{GRR}
For $k$ small enough the equation
\begin{equation}\label{9}
2k+\varepsilon\mathcal{F}_\varepsilon(k)=0,
\end{equation}
where
\begin{equation*}
\mathcal{F}_\varepsilon(k)=\left<\phi_0 (I+\varepsilon
T_\varepsilon(k))^{-1}\mathcal{L}_\varepsilon \phi_0\right>,
\end{equation*}
has the unique solution $k_\varepsilon$.

If $\mathrm{Re}\,k_\varepsilon<0$, then the operator $\mathcal{H}_\varepsilon$
has no eigenvalues converging to $\mu_0$.

If $\mathrm{Re}\,k_\varepsilon>0$, then the operator $\mathcal{H}_\varepsilon$
has an eigenvalue converging to $\mu_0$ which is determined by the identity
\begin{equation}\label{GR9}
e_\varepsilon=\mu_0-k^2_\varepsilon.
\end{equation}
\end{theorem}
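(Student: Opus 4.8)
The plan is to carry out a Birman--Schwinger type reduction that turns the eigenvalue problem $\mathcal{H}_\varepsilon u=\lambda u$ near the threshold into the scalar equation (\ref{9}). I parametrize a prospective eigenvalue as $\lambda=\mu_0-k^2$; the point of the representation (\ref{A}) is that for $\mathrm{Re}\,k>0$ the operator $A(k)$ is the resolvent $\mathcal{R}_\mathcal{D}(\mu_0-k^2)$, while its explicit kernel continues $A(k)$ analytically to all small complex $k$. First I record the decomposition $A(k)g=\frac{\phi_0}{2k}\left<g\phi_0\right>+\mathcal{\widetilde{R}}_\mathcal{D}(k)g$ into a singular principal part and a regular remainder (\ref{A1}), and I check that $\mathcal{\widetilde{R}}_\mathcal{D}(k)$ is holomorphic and uniformly bounded from $L^2(\Pi;Q)$ into $H^2(Q)$ for small $k$: the tail terms $j\geq1$ are harmless because $K_j(k)$ stays away from $0$, and the would-be singular $j=0$ term becomes regular after the subtraction since $\mathrm{e}^{-k|x_n-t_n|}-1=\mathcal{O}(k|x_n-t_n|)$ with $|x_n-t_n|$ bounded on $Q\times Q$. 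Together with the uniform bound (\ref{L}) this makes $T_\varepsilon(k)=\mathcal{L}_\varepsilon\mathcal{\widetilde{R}}_\mathcal{D}(k)$ uniformly bounded and holomorphic, so $(I+\varepsilon T_\varepsilon(k))^{-1}$ exists by a Neumann series for $\varepsilon$ small and depends holomorphically on $k$.

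Next I derive the characteristic equation. Writing $g:=\mathcal{L}_\varepsilon u\in L^2(\Pi;Q)$, the equation $\mathcal{H}_\varepsilon u=(\mu_0-k^2)u$ is equivalent to $u=-\varepsilon A(k)g$; applying $\mathcal{L}_\varepsilon$ and using the decomposition gives
\[(I+\varepsilon T_\varepsilon(k))g=-\frac{\varepsilon}{2k}\left<g\phi_0\right>\mathcal{L}_\varepsilon\phi_0,\]
whence
\[g=-\frac{\varepsilon}{2k}\left<g\phi_0\right>(I+\varepsilon T_\varepsilon(k))^{-1}\mathcal{L}_\varepsilon\phi_0.\]
Pairing with $\phi_0$ yields $\left<g\phi_0\right>\bigl(1+\frac{\varepsilon}{2k}\mathcal{F}_\varepsilon(k)\bigr)=0$. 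A nontrivial eigenfunction forces $\left<g\phi_0\right>\neq0$ (otherwise $g=0$ and $u=-\varepsilon A(k)g=0$), so this is precisely (\ref{9}); conversely any small root $k$ of (\ref{9}) reconstructs a nonzero $g$ and hence $u=-\varepsilon A(k)g$ solving the eigenvalue equation. For unique solvability I rewrite (\ref{9}) as the fixed point $k=-\frac{\varepsilon}{2}\mathcal{F}_\varepsilon(k)$; since $\mathcal{F}_\varepsilon$ and its $k$-derivative are bounded uniformly in $\varepsilon$, the right-hand side maps a small disc into itself and is a contraction for $\varepsilon$ small, giving a unique root $k_\varepsilon=\mathcal{O}(\varepsilon)$.

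Finally I establish the dichotomy. Equation (\ref{9}) is defined for all small complex $k$ by the analytic continuation built into the kernel (\ref{A}), but a root yields a genuine eigenvalue only when $u=-\varepsilon A(k)g$ is square-integrable, which happens precisely on the physical half-plane $\mathrm{Re}\,k>0$. Indeed, the longitudinal behaviour of $u$ as $|x_n|\to\infty$ is governed by $\mathrm{e}^{-k|x_n-t_n|}$ for the principal mode and by $\mathrm{e}^{-K_j(k)|x_n-t_n|}$ for the tail; the tail always decays since $\mathrm{Re}\,K_j(k)>0$ for small $k$, so only the principal mode matters, and its coefficient is a nonzero multiple of $\left<g\phi_0\right>\neq0$. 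If $\mathrm{Re}\,k_\varepsilon>0$ then $u\in L^2(\Pi)$ is a genuine eigenfunction and $A(k_\varepsilon)=\mathcal{R}_\mathcal{D}(\mu_0-k_\varepsilon^2)$ identifies $e_\varepsilon=\mu_0-k_\varepsilon^2$ as an eigenvalue. If $\mathrm{Re}\,k_\varepsilon<0$ the principal exponential grows and cannot be cancelled, so this candidate is not in $L^2$; moreover the unique small root then lies off the physical half-plane, so by the completeness of the reduction (every eigenvalue near $\mu_0$ produces a root of (\ref{9}) with $\mathrm{Re}\,k>0$) the operator has no eigenvalue converging to $\mu_0$.

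The step I expect to be the main obstacle is the rigorous separation of the physical sheet, namely proving that square-integrability of $u=-\varepsilon A(k_\varepsilon)g$ is equivalent to $\mathrm{Re}\,k_\varepsilon>0$ and using the non-vanishing of $\left<g\phi_0\right>$ to exclude an eigenvalue when $\mathrm{Re}\,k_\varepsilon<0$; the uniform boundedness and holomorphy of $T_\varepsilon(k)$, the Neumann-series inversion of $I+\varepsilon T_\varepsilon(k)$, and the contraction argument for (\ref{9}) are comparatively routine.
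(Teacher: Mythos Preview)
The paper does not contain a proof of Theorem~\ref{GRR}; it is stated in the Preliminaries and attributed to \cite{GRR}, so there is nothing in this paper to compare your argument against. That said, your outline is exactly the Birman--Schwinger/Lyapunov--Schmidt reduction one expects: the paper's own scaffolding---the explicit kernel (\ref{A}) giving the analytic continuation of the resolvent, the splitting (\ref{A1}) into a rank-one singular part plus a holomorphic remainder, and the definition (\ref{gr1}) of $T_\varepsilon(k)$---is set up precisely for the argument you sketch, and your derivation of (\ref{9}), the contraction argument for the unique root $k_\varepsilon$, and the physical-sheet dichotomy via $\mathrm{Re}\,k_\varepsilon$ are all correct at the level of a proof plan and consistent with how \cite{GRR} proceeds.
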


The definition of the function $\mathcal{F}_\varepsilon(k)$ implies
that for any $N\geq2$ the representation
\begin{equation}\label{gr4}
\mathcal{F}_\varepsilon(k)=\left<\phi_0\mathcal{L}_\varepsilon
\phi_0\right>+\sum\limits_{j=1}^{N-1}(-1)^j\varepsilon^j\left<\phi_0T_\varepsilon^j(k)\mathcal{L}_\varepsilon
\phi_0\right>+\varepsilon^N\mathcal{F}_{\varepsilon,N}(k),
\end{equation}
holds true, where $\mathcal{F}_{\varepsilon,N}(k)$ are holomorphic
in $k$ functions bounded uniformly in $\varepsilon$.

\section{Proof of Theorem~\ref{th1}}

Before proceeding to the proof of Theorem~\ref{th1} we prove an
auxiliary statement. Let $Q$  be a bounded domain with
$C^2$-boundary lying in $\Pi$ and containing the origin. By $h Q$ we
denote $h^{-1}$-fold contraction of a set $Q$.

\begin{lemma}\label{l2.1}
For any function $u\in {H^2}(\Pi)$ the inequality
\begin{gather}\label{2.8}
\int\limits_{hQ}|u|^2 \mathrm{d}x\leq
C_2\beta_n^2(h)\|u\|_{H^2(Q)}^2,
\end{gather}
holds true, where the constant $C_2$ is independent of $h$.
\end{lemma}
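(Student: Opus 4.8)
The plan is to reduce the inequality (\ref{2.8}) to the Sobolev embedding theorem applied on the \emph{fixed} domain $Q$, arranged so that all of the $h$-dependence is carried by the Lebesgue measure of the shrinking set $hQ$, which equals $h^n|Q|$.

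First, since $Q$ is open and contains the origin, one fixes balls $B_r\subset Q\subset B_\rho$ centred at the origin; then $hQ\subset B_{h\rho}\subset B_r\subset Q$ for $h<r/\rho$, and we may assume $h$ is that small throughout (consistently with $0<h\ll1$). Fix an exponent $q\in[2,\infty]$ in the Sobolev range of $H^2(Q)$. Applying H\"older's inequality on $hQ$ and then the embedding $H^2(Q)\hookrightarrow L^q(Q)$ gives
\begin{equation*}
\int\limits_{hQ}|u|^2\,\mathrm{d}x\le|hQ|^{\,1-2/q}\|u\|_{L^q(hQ)}^2\le\bigl(h^n|Q|\bigr)^{1-2/q}\|u\|_{L^q(Q)}^2\le C\,h^{\,n(1-2/q)}\|u\|_{H^2(Q)}^2,
\end{equation*}
where $C=C(n,q,Q)$. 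The essential point is that the embedding constant is taken on the fixed domain $Q$, so it does not degenerate as $h\to0$; all of the $h$-dependence has been pulled out as the volume factor $h^n$ via H\"older's inequality.

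It remains to choose $q$ so that $n(1-2/q)\ge2$. For $n\ge3$ take $q=2n/(n-2)$, the critical Sobolev exponent, using $H^2(Q)\hookrightarrow H^1(Q)\hookrightarrow L^{2n/(n-2)}(Q)$; then $n(1-2/q)=2$. For $n=2$ take $q=\infty$: here $H^2(Q)\hookrightarrow C(\overline Q)$ since $2>n/2$, so $\int_{hQ}|u|^2\,\mathrm{d}x\le|hQ|\,\|u\|_{C(\overline Q)}^2\le Ch^2\|u\|_{H^2(Q)}^2$. In both cases
\begin{equation*}
\int\limits_{hQ}|u|^2\,\mathrm{d}x\le C\,h^{2}\|u\|_{H^2(Q)}^2\le C\,\beta_n^2(h)\|u\|_{H^2(Q)}^2
\end{equation*}
for $h$ small, the last step because $\beta_n(h)\ge h$; this is (\ref{2.8}). (The bound in fact has the power $h^2$ in either dimension, so for $n=2$ it is slightly stronger than stated; this is of no consequence.)

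The argument is elementary, and I do not expect a genuine obstacle. The one point requiring care is precisely the uniformity in $h$ of the Sobolev embedding constant: this is why the embedding must be invoked on the fixed domain $Q$, with the shrinking volume $h^n$ peeled off separately by H\"older's inequality, rather than embedding directly on the tiny set $hQ$, where the Sobolev constant would blow up as $h\to0$.
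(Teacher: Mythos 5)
Your proof is correct, but it takes a genuinely different route from the paper's. The paper first extends $u$ from $Q$ to a function $U\in H^2_0(B)$ on a larger fixed domain $B\supset\overline{Q}$ with $\|U\|_{H^2(B)}\le C\|u\|_{H^2(Q)}$, and then invokes a Hardy-type inequality (cited from Oleinik--Iosif'yan--Shamaev for $n\ge3$ and Oleinik--Sanchez-Hubert--Yosifian for $n=2$) asserting $\int_{hQ}|U|^2\,\mathrm{d}x\le C\beta_n^2(h)\int_B|\nabla U|^2\,\mathrm{d}x$ for $U\in H^1_0(B)$; the logarithmic factor in $\beta_2$ is precisely the price of using only the $H^1$ norm in two dimensions. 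You replace both ingredients by H\"older's inequality on $hQ$ followed by the Sobolev embedding $H^2(Q)\hookrightarrow L^q(Q)$ on the fixed domain, which is more elementary and self-contained (no extension operator, no external Hardy inequality) and correctly isolates the one delicate point, namely the uniformity in $h$ of the embedding constant. Because you exploit the full $H^2$ regularity ($H^2(Q)\hookrightarrow C(\overline{Q})$ for $n=2$), you even obtain the stronger bound $Ch^2\|u\|^2_{H^2(Q)}$ in the two-dimensional case, without the $|\ln h|$; since $h^2\le\beta_2^2(h)$ for small $h$, the stated estimate follows. The supporting steps — the inclusion $hQ\subset Q$ for small $h$ via nested balls, the volume identity $|hQ|=h^n|Q|$, and the choice $q=2n/(n-2)$ for $n\ge3$ giving the exponent $n(1-2/q)=2$ — all check out.
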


\begin{proof}
Let $B$ be a bounded set in $\mathbb{R}^n$, $n\geq 2$, containing
the origin. By $H^m_0(B)$ we denote the set of the functions in
$H^m(B)$ vanishing on $\partial B$.

In (\cite[Ch. 3, \S 5, Lm. 5.1]{OIoSh}) and \cite{OPY} for
$n\geqslant3$ and $n=2$, respectively, it was shown that for any
function $U\in{H^1_0}(B)$ the inequality
\begin{equation}\label{2.4}
\int\limits_{h Q}|U|^2\mathrm{d}x\leq C_3\beta_n^2(h)\int\limits_{B}
|\nabla U|^2 \mathrm{d}x
\end{equation}
holds true, where $C_3$ is a constant depending on the domain $B$.
This inequality is the corollary the Hardy inequality (see
\cite{KonOl}). We choose $B$ such that $\overline{Q}\subset B$.

It is well-known (see, for instance, \cite[Ch. 3, \S 4, Thm.
1]{Mikh}) that for any function $u\in {H^2}(Q)$ there exists a
continuation $U\in H^2_0(B)$ such that
\begin{equation}\label{qwerty}
\|U\|_{H^2(B)}\leq C_4 \|u\|_{H^2(Q)},
\end{equation}
where the constant $C_4$ depends only on $B$ and $Q$.

By (\ref{2.4}), (\ref{qwerty}) for any function $u\in H^2(\Pi)$ we
derive the inequalities
\begin{equation*}
\begin{aligned}
\int\limits_{h Q}|u|^2\mathrm{d}x=&\int\limits_{h
Q}|U|^2\mathrm{d}x\leq C_3\beta_n^2(h)\int\limits_{B}|\nabla U|^2
\mathrm{d}x \leq C_5\beta_n^2(h)\|U\|^2_{H^2(B)}\\ \leq&
C_4^2C_5\beta_n^2(h)\|u\|^2_{H^2(Q)}.
\end{aligned}
\end{equation*}
The proof is complete.
\end{proof}

We proceed to the proof of Theorem~\ref{th1}. We denote
\begin{equation}\label{gr}
\varepsilon(h):=h^{-\alpha}\beta_n(h).
\end{equation}
The definition of $\beta_n(h)$ yields that $\varepsilon(h)>0$ and
\begin{equation*}
\varepsilon(h)\underset{h\to0}\to0.
\end{equation*}
It is also obvious that for $h>0$ small enough there exists a function
$h(\varepsilon)$ inverse to the function $\varepsilon(h)$ such that
\begin{equation*}
\begin{aligned}
&h(\varepsilon)>0,\qquad
h(\varepsilon)\underset{\varepsilon\to0}\to0.
\end{aligned}
\end{equation*}

In view of the definition (\ref{gr}) the perturbing potential of the
operator $\mathcal{H}^h$ can be represented as
\begin{equation*}
h^{-\alpha}\mathcal{V}_h=\varepsilon(h) \frac{\mathcal{V}_h}{
    \beta_n(h)}.
\end{equation*}

Let us show that the operator $\mathcal{L}_\varepsilon$ defined as
the multiplication by the function
$\beta_n^{-1}(h(\varepsilon))\mathcal{V}_{h(\varepsilon)}$
\begin{equation}\label{oper}
  \mathcal{L}_\varepsilon u:=\frac{\mathcal{V}_{h(\varepsilon)}u}{\beta_n(h(\varepsilon))},
\end{equation}
satisfies the estimate (\ref{L}), if we treat it as the operator
from $H^2(Q)$ to $L^2(\Pi; Q)$. In other words,
\begin{equation}\label{p3}
\left\|\frac{\mathcal{V}_h}{\beta_n(h)}u\right\|_{L^2(\Pi)}\leq
C_6\|u\|_{H^2(Q)},
\end{equation}
where $C_6$ is a constant independent of $h$.

Indeed, without loss of generality we can assume that the support of the
function $\mathcal{V}$ lies in a bounded domain $Q\subset\Pi$. Then by
Lemma~\ref{l2.1}, for each $u\in H^2(\Pi)$ we consequently obtain
\begin{equation*}
\begin{aligned}
\int\limits_\Pi\left| \mathcal{V}_h(x)u(x)\right|^2
\mathrm{d}x=&\int\limits_{h Q}\left|
\mathcal{V}\left(\frac{x}{h}\right)u(x)\right|^2 \mathrm{d}x\leq
\max_{x\in \overline{Q}}|\mathcal{V}(x)| \int\limits_{h
Q}\left|u(x)\right|^2 \mathrm{d}x
\\
\leq&C_7\beta_n(h)\|u\|^2_{H^2(Q)}.
\end{aligned}
\end{equation*}
This inequality implies the estimate (\ref{p3}), and therefore the
estimate (\ref{L}) for the operator defined by the identity
(\ref{oper}). Hence, for the operator defined by (\ref{oper})
Theorem~\ref{GRR} holds true, where the operator $T_\varepsilon(k)$
in (\ref{gr1}) is introduced as
\begin{equation*}
T_{\varepsilon(h)}(k)g=\beta_n^{-1}(h) {\mathcal
V}_h\mathcal{\widetilde{R}}_\mathcal{D}(k)g.
\end{equation*}
Together with (\ref{oper}) and (\ref{gr}) it follows that for any
natural $N$ the identity (\ref{gr4}) becomes
\begin{equation}\label{gr4+}
\begin{aligned}
\mathcal{F}_\varepsilon(k)=&\beta_n^{-1}(h)\left(\left<\phi_0^2{\mathcal
V}_h\right>+\sum\limits_{j=1}^{N-1}(-1)^j
h^{-j\alpha}\left<\phi_0({\mathcal
V}_h\mathcal{\widetilde{R}}_\mathcal{D}(k))^j({\mathcal V}_h
\phi_0)\right>\right)\\&+(h^{-\alpha}\beta_n(h))^N\mathcal{F}_{\varepsilon,N}(k).
\end{aligned}
\end{equation}

Denote
\begin{equation*}
\begin{aligned}
&\Phi_0(\xi')=\sum\limits_{q=1}^{n-1} \frac{\partial
\phi_0}{\partial \xi_q}(0)\xi_q.
    \end{aligned}
\end{equation*}
By direct calculations we check that
\begin{align}
&
\begin{aligned}
\left<\phi_0^2\mathcal{V}_h\right>=&\int\limits_{h Q}
    \mathcal{V}\left(\frac{x}{h}\right)\phi_0^2(x')\mathrm{d}x=
h^{n}\int\limits_{Q}\mathcal{V}(\xi)\phi^2_0(h\xi')\mathrm{d}\xi\\=
    &
    h^n\phi_0^2(0)\left<\mathcal{V}\right>+h^{n+1}2\phi_0(0)\left<\Phi_0\mathcal{V}\right>
    +\mathcal{O}(h^{n+2}),\label{pos}
\end{aligned}
\\
&\left\|\phi_0
   \mathcal{V}_h\right\|^2_{L^2(\Pi)}=
h^{n}\int\limits_{Q}\mathcal{V}^2(\xi)\phi^2_0(h\xi')\mathrm{d}\xi=\mathcal{O}(h^{n}).\label{pos+}
\end{align}
Lemma~\ref{l2.1} and the definition of the operator
$\mathcal{\widetilde{R}}_\mathcal{D}(k)$ imply that for each
function $g\in L^2(\Pi; Q)$ the estimate
\begin{equation}\label{60}
    \|\mathcal{\widetilde{R}}_\mathcal{D}(k)g\|_{L^2(hQ)}\leq
    C_2\beta_n(h)\|\mathcal{\widetilde{R}}_\mathcal{D}(k)g\|_{H^2(Q)}
    \leq C_2C_{\mathcal{R}}\beta_n(h)\|g\|_{L^2(\Pi)}
\end{equation}
holds true.

We denote
\begin{equation*}
a_j=\left|\left<\phi_0({\mathcal
V}_h\mathcal{\widetilde{R}}_\mathcal{D}(k))^j({\mathcal V}_h
\phi_0)\right>\right|,\qquad C_\mathcal{V}=\underset{x\in
\overline{G}}{\max}|\mathcal{V}(x)|.
\end{equation*}
Then by (\ref{60}) and (\ref{pos+}) for $j\geq1$ we
have
\begin{equation}\label{51}
\begin{aligned}
a_j\leq& \left\|\phi_0
   \mathcal{V}_h\right\|_{L^2(hQ)}\left\|\mathcal{\widetilde{R}}_\mathcal{D}(k)(\mathcal{V}_h
   {\widetilde{R}}_\mathcal{D}(k))^{j-1}
   (\mathcal{V}_h\phi_0)\right\|_{L^2(hQ)}.
   \\
\leq &C_2 C_{\mathcal{R}}\beta_n(h)\left\|\phi_0
   \mathcal{V}_h\right\|_{L^2(\Pi)}\left\|(\mathcal{V}_h
   {\widetilde{R}}_\mathcal{D}(k))^{j-1}
   (\mathcal{V}_h\phi_0)\right\|_{L^2(\Pi)}\\
= &C_2 C_{\mathcal{R}}\beta_n(h)\left\|\phi_0
   \mathcal{V}_h\right\|_{L^2(\Pi)}\left\|(\mathcal{V}_h
   {\widetilde{R}}_\mathcal{D}(k))^{j-1}
   (\mathcal{V}_h\phi_0)\right\|_{L^2(hQ)}\\
\leq &C_2 C_{\mathcal{R}}C_\mathcal{V}\beta_n(h)\left\|\phi_0
   \mathcal{V}_h\right\|_{L^2(\Pi)}\left\|{\widetilde{R}}_\mathcal{D}(k)(\mathcal{V}_h
   {\widetilde{R}}_\mathcal{D}(k))^{j-2}
   (\mathcal{V}_h\phi_0)\right\|_{L^2(hQ)}\\
\leq &\cdots\\
\leq &\left(C_2
C_{\mathcal{R}}C_\mathcal{V}\beta_n(h)\right)^{j-1}\left\|\phi_0
   \mathcal{V}_h\right\|_{L^2(\Pi)}\left\|
   {\widetilde{R}}_\mathcal{D}(k)
   (\mathcal{V}_h\phi_0)\right\|_{L^2(hQ)}
  .
\end{aligned}
\end{equation}

Together with (\ref{pos+}) and (\ref{60}) it implies
that
\begin{equation}\label{B12}
a_j =\mathcal{O}(h^{n}\beta_n^j(h)).
\end{equation}

The number $N$ in the estimate (\ref{gr4+}) is arbitrary that
together with (\ref{B12}), (\ref{pos}) and (\ref{gr}) yields
\begin{equation*}\label{gr5}\begin{aligned}
\varepsilon\mathcal{F}_\varepsilon(k)=&h^{n-\alpha}\left(\phi_0^2(0)
\left<\mathcal{V}\right>+h2\phi_0(0)\left<\Phi_0\mathcal{V}\right>+
\mathcal{O}\left(h^2+h^{-\alpha}\beta_n(h)\right)\right).
\end{aligned}
\end{equation*}
It follows by (\ref{9}) that
\begin{equation}\label{GR7}
k_{\varepsilon(h)}=\frac{1}{2}h^{n-\alpha}\left(\phi_0^2(0)
\left<\mathcal{V}\right>+h2\phi_0(0)\left<\Phi_0\mathcal{V}\right>
+\mathcal{O}\left(h^2+h^{-\alpha}\beta_n(h)\right)\right).
\end{equation}
And, finally, the last identity and Theorem~\ref{GRR} lead us to
Theorem~\ref{th1}.

\begin{remark}\label{rmGR2} Let us consider the critical case
(\ref{!+}) for $\alpha<0$. It follows from the identity (\ref{GR7})
that if
\begin{equation*}
\phi_0(0)\,\mathrm{Re}\left<\Phi_0\mathcal{V}\right> > 0,
\end{equation*}
then $k_{\varepsilon(h)}<0$, and therefore by Theorem~\ref{GRR} the
operator $\mathcal{H}^h$ has no eigenvalues converging to $\mu_0$.

If
\begin{equation}\label{GR10}
\phi_0(0)\,\mathrm{Re}\left<\Phi_0\mathcal{V}\right> < 0,
\end{equation}
then the identity (\ref{GR7}) implies that $k_{\varepsilon(h)}>0$
and thus by Theorem~\ref{GRR}  the operator $\mathcal{H}^h$ has a
unique, and, in addition, simple eigenvalue converging to $\mu_0$.
For the real-valued functions $\mathcal{V}$ the inequality
(\ref{GR10}) casts into the form
\begin{equation*}
\phi_0(0)\left<\Phi_0\mathcal{V}\right> < 0,
\end{equation*}
and by the identities (\ref{GR7}) and (\ref{GR9}) the asymptotics of
the eigenvalue reads as follows
\begin{equation*}\label{g5B}
e(h)=\mu_0-h^{2(n+1-\alpha)}
\left(\phi_0(0)\left<\Phi_0\mathcal{V}\right>\right)^2\left(1+
\mathcal{O}(h+h^{-1-\alpha}\beta_n(h))\right).
\end{equation*}
\end{remark}

%

\section*{Critical case in the strip as $0\leq\alpha<\frac{1}{2}$}

Let us consider the case $n=2$,
\begin{equation*}\label{!1}
\mathcal{V}(t)=v(t_1)\widetilde{v}(t_2),
\end{equation*}
where
\begin{equation*}\label{!2}
\widetilde{v}(t_2)=
\begin{cases}
1&\text{as $|t_2|<1$}\\
0&\text{as $|t_2|>1$}.
\end{cases}
\end{equation*}
In this case the identity (\ref{pos}) becomes
\begin{equation}\label{posD}
\begin{aligned}
   \left<\phi_0^2\mathcal{V}_h\right>=&2h^2
   \left(\phi_0^2(0)\left<v\right>'+2h\phi_0(0)\phi'_0(0)
   \int\limits_\Omega
   v(t_1)t_1\mathrm{d}t_1\right)+
   \mathcal{O}(h^4),
\end{aligned}
\end{equation}
where
\begin{equation*}
\left<g\right>':=\int\limits_\Omega
   g(t_1)\mathrm{d}t_1,
\end{equation*}
and by (\ref{A}) and (\ref{A1}) the function
${\widetilde{R}}_\mathcal{D}(k) (\mathcal{V}_h\phi_0)$ reads as
follows
\begin{equation}\label{A2}
\begin{aligned}
{\widetilde{R}}_\mathcal{D}(k)
   (\mathcal{V}_h\phi_0)=&b_0(x_2;k)\phi_0(x_1)\left<v_h\phi^2_0\right>'
\\&+ \sum\limits_{j=1}^\infty
b_j(x_2;k)\frac{\phi_j(x_1)}{K^2_j(k)}\left<\phi_j
v_h\phi_0\right>',
\end{aligned}
\end{equation}
where $v_h(x_1)=v(x_1h^{-1})$,
\begin{equation}\label{A3}
\begin{aligned}
b_0(x_2;k)=&\frac{1}{k}\left(\frac{1}{k}\left(1-e^{-kh}\cosh(kx_2)\right)-h\right),\\
b_j(x_2;k)=&1-e^{-K_j(k)h}\cosh(K_j(k)x_2),\quad j\geq1.
\end{aligned}
\end{equation}

In the case considered $Q=\Omega\times(-1,1)$, $\Omega$ is an
interval $(\omega_-,\omega_+)$, $\pm\omega_\pm>0$. Denote
$Q_h:=\Omega\times(-h,h)$. It follows from (\ref{A2}) and (\ref{A3})
that
\begin{equation}\label{A4}
\|\mathcal{\widetilde{R}}_\mathcal{D}(k)(\mathcal{V}_h\phi_0)\|_{L^2(Q_h)}^2\leq
 Ch \sum\limits_{j=0}^\infty
\frac{|\left<\phi_j v_h\phi_0\right>'|^2}{\mu_j}.
\end{equation}
It is well-known that for the solution to the boundary value problem
\begin{equation}\label{A5}
-\frac{\mathrm{d}^2U}{\mathrm{d}x_1^2}=v_h\phi_0,\quad
x_1\in\Omega,\qquad U(\omega_\pm)=0
\end{equation}
the identity
\begin{equation*}
    \|U\|_{L^2(\Omega)}^2=\sum\limits_{j=0}^\infty
\frac{|\left<\phi_j v_h\phi_0\right>'|^2}{\mu_j}
\end{equation*}
holds true. Hence, by (\ref{A4}) and the inclusion $hQ\subset Q_h$
\begin{equation}\label{A6}
 \|\mathcal{\widetilde{R}}_\mathcal{D}(k)(\mathcal{V}_h\phi_0)\|_{L^2(hQ)}^2\leq
 Ch \|U\|_{L^2(\Omega)}^2.
\end{equation}
We represent the solution to (\ref{A5}) as
\begin{equation*}
U=U_0+U_1+\widetilde{U},
\end{equation*}
where $U_j$ and $\widetilde{U}$ solve the problems
\begin{align}\label{A7}
-\frac{\mathrm{d}^2U_0}{\mathrm{d}x_1^2}=&v_h\phi_0(0),\quad
x_1\in\Omega,\qquad U_0(\omega_\pm)=0,\\
-\frac{\mathrm{d}^2U_1}{\mathrm{d}x_1^2}=&v_h\phi'_0(0) x_1,\quad
x_1\in\Omega,\qquad U_1(\omega_\pm)=0,\label{A8}\\
-\frac{\mathrm{d}^2\widetilde{U}}{\mathrm{d}x_1^2}=&\widetilde{f},\quad
x_1\in\Omega,\qquad
\widetilde{U}(\omega_\pm)=0,\nonumber\\
\widetilde{f}(x_1)=&v_h(x_1)\left(\phi_0(x_1)-\phi_0(0)-\phi'_0(0)
x_1\right)\nonumber.
\end{align}

By direct calculations we check that
\begin{equation*}
\|\widetilde{f}\|^2_{L^2(\Omega)}= \mathcal{O}(h^{5}).
\end{equation*}
Therefore,
\begin{equation}\label{A9}
\|\widetilde{U}\|^2_{L^2(\Omega)}= \mathcal{O}(h^{5}).
\end{equation}

The solution to (\ref{A8}) can be found explicitly,
\begin{equation*}
U_1(x_1)=-h^3\phi'_0(0)\int\limits_{-\infty}^{\frac{x_1}{h}}
\int\limits_{-\infty}^{\xi}v(\eta)\eta\,\mathrm{d}\eta
\,\mathrm{d}\xi+h^2((c_1+c_1'h)x_1+(d_1+d_1'h)),
\end{equation*}
where $c_1$, $c'_1$, $d_1$, and $d'_1$ are explicitly calculated
constants. Hence,
\begin{equation}\label{A10}
\|U_1\|^2_{L^2(\Omega)}= \mathcal{O}(h^{4}).
\end{equation}

We choose $v$ so that
\begin{equation}\label{!!}
\left<v\right>'=0.
\end{equation}
In this case the identity (\ref{!}) holds true and also the solution
to the boundary value problem (\ref{A7}) is as
follows
\begin{equation*}
U_0(x_1)=-h^2\phi_0(0)\int\limits_{-\infty}^{\frac{x_1}{h}}\int\limits_{-\infty}^{\xi}v(\eta)\,\mathrm{d}\eta
\,\mathrm{d}\xi+h^2(c_0x_1+d_0),
\end{equation*}
where $c_0$, and  $d_0$ can be also found explicitly. Therefore,
\begin{equation}\label{A11}
\|U_0\|^2_{L^2(\Omega)}= \mathcal{O}(h^{4}).
\end{equation}

It follows from (\ref{A9})--(\ref{A11}) and (\ref{A6}) that
\begin{equation*}
    \|\mathcal{\widetilde{R}}_\mathcal{D}(k)(\mathcal{V}_h\phi_0)\|_{L^2(hQ)}^2=\mathcal{O}(h^5).
\end{equation*}
This identity, (\ref{pos+}), and (\ref{51}) imply
\begin{equation}\label{A12}
a_j =\mathcal{O}\left(\beta_2^{j-1}(h)h^{\frac{7}{2}}\right).
\end{equation}

The number $N$ in the estimate (\ref{gr4+}) is arbitrary that
together with (\ref{A12}), (\ref{posD}), (\ref{!}), (\ref{gr}) and
(\ref{!!}) yields
\begin{equation*}
\begin{aligned}
\varepsilon\mathcal{F}_\varepsilon(k)=&h^{3-\alpha}\left(4\phi_0(0)
\phi'_0(0)
   \int\limits_\Omega
   v(t_1)t_1\mathrm{d}t_1+\mathcal{O}\left(h^{\frac{1}{2}-\alpha}\right)\right)
\end{aligned}
\end{equation*}
for $n=2$.
It follows by (\ref{9}) that
\begin{equation*}
k_{\varepsilon(h)}=h^{3-\alpha}\left(2\phi_0(0) \phi'_0(0)
   \int\limits_\Omega
   v(t_1)t_1\mathrm{d}t_1+\mathcal{O}\left(h^{\frac{1}{2}-\alpha}\right)\right).
\end{equation*}

Let $|\omega_-|\not=\omega_+$. Then
\begin{equation*}
 \phi_0(0) \phi'_0(0)\not=0.
 \end{equation*}
Obviously, there exists a compactly supported function $\psi(x_1)$
such that
\begin{equation*}
\left<\psi\right>'=0,\qquad \int\limits_\Omega
\psi(t_1)t_1\mathrm{d}t_1\not=0.
\end{equation*}
Then letting $v=\psi$ and $v=-\psi$ we obtain that the quantity
$k_{\varepsilon(h)}$ has different signs in these cases. Then by
Theorem~\ref{GRR} in one case the eigenvalue exists, while in the
other does not.

\section*{Acknowledgments}

 We thank D.~Borisov
 for useful remarks.

The research was supported by Russian Fund of Basic Research under
the contract 08-01-97016-r\_povolzhie, by the grant of the President
of Russia for young scientist and their supervisors, and by the
grant of the President of Russia for leading scientific schools
(NSh-2215.2008.1).

\end{document}